\definecolor{Green}{rgb}{0.20,0.43,0.09}
\newtheorem{Thm}{Theorem}
\newtheorem{Lem}[Thm]{Lemma}
\theoremstyle{definition}
\newtheorem{Def}[Thm]{Definition}
\theoremstyle{remark}
\numberwithin{equation}{section}
\def\RR{{\mathbb{R}}}
\def\ZZ{{\mathbb{Z}}}
\title[Twin Paradox and the No Show Paradox]{Topological aggregation,\\
the Twin Paradox and the No Show Paradox}
\date{\today}
\author[G.~Ch\`eze]{Guillaume Ch\`eze}
\address{Guillaume Ch\`eze: Institut de Math\'ematiques de Toulouse\\
Universit\'e Paul Sabatier \\
118 route de Narbonne\\
31 062 TOULOUSE cedex 9, France}
\email{guillaume.cheze@math.univ-toulouse.fr}
\date{\today}
\begin{document}
	
\begin{abstract}
Consider the framework of topological aggregation introduced by \cite{Chichilnisky80}. We prove that in this framework the Twin Paradox and the No Show Paradox cannot be avoided. Anonymity and unanimity are not needed to obtain these results.
\end{abstract}	
	
\maketitle


\section*{Introduction}
The goal of this note is to show that we cannot avoid some paradoxes in the topological social choice setting. Topological social choice has been introduced by \cite{Chichilnisky80}. Nowadays, there exists different surveys and collections of papers about this topic, e.g. \citep{Baigent,Heallivre,Lauwers2000,Lauwers,Mehta} and roughly speaking it is the study of continuous aggregation rules. 

In the topological social choice framework, an aggregation rule between $k$ agents is  a function from $X^k$ to $X$, where $X$ is a topological space. Chichilnisky has  shown an impossibility result:  there exists no continuous aggregation rule defined on the sphere $S^n$ satisfying anonymity and unanimity. This result has been proved independently by \cite{Eckmann1954,Eckmann2004}. \\

In this note we study the Twin Condition (roughly, identical twins have more power) and the Participation Condition (voting does not harm). With an homological approach, we prove  that on the sphere $S^n$  neither the Twin Condition nor the Participation Condition can be satisfied by a continuous aggregation rule.  As the Twin Condition (respectively the Participation Condition) is not satisfied for continuous aggregation rule on the sphere we say that we have a Twin Paradox (respectively a No Show Paradox).\\

When we study the Twin Condition, two players have exactly the same preference. We can think about this situation in the following way: one player moves his/her preference to a new preference $x$. Furthermore, this new preference $x$ was already the preference of another player. Thus $x$ is now the choice of more players. Therefore, we hope that the new social preference will reflect this  situation and will be closer to $x$.\\
There already exists papers studying the behavior of aggregation rules when players preferences move closer to the preference of another player. \\
 \cite{Macintyre} has used a convexity condition and a monotonicity condition in order to study this kind of situations for the circle $S^1$. For two players with different preferences the convexity condition used by Macintyre is roughly the following: the social outcome must be contained in the small semicircle with end-points the preference of these two players. The monotonicity condition means that if the first player moves his/her preference closer to the preference of the second player then the social outcome must not penalize the second player.
With elementary methods and no algebraic topology Macintyre has shown that the only possible continuous aggregation rules with domain $S^1$ satisfying his convexity condition and the monotonicity condition are dictatorial.  This result is given for two players with preference in $S^1$. In the same article Macintyre has also studied the situation with $k$ players with preferences in $S^1$. In this case the convexity condition becomes: the social outcome must be contained in a semicircle containing a \emph{majority} of players with end-points the preference of at least two of the players. Macintyre has then shown that there exists no continuous aggregation rule satisfying the convexity condition for $k\geq 2$ players. This result is still proved with elementary methods and only for the circle $S^1$.  \\
\cite{Prox-cond} have  studied the role of what they called a proximity condition for merging function. A merging function is a function $f:\RR^k \rightarrow \RR$, where $\RR$ denotes the real line. The role of a merging function is to aggregate real numbers $x_1,\ldots, x_k$ given by the players to a  single value $f(x_1,\ldots,x_k)$.  For example, the arithmetic mean is a merging function. The authors have studied merging function satisfying the following proximity condition: suppose that \emph{every} $x_j$ moves strictly closer to some individual's number $x_i$ then the merging function does not change or move closer to the individual's number $x_i$. In this context, closer means closer relatively to the usual distance on  real numbers. Duddy and Piggins have shown that scale invariant functions satisfying their proximity condition are dictatorial. \\

Here we study functions with domain $S^n$, with $n \geq 1$. Our Twin Condition is in the same vein than the conditions given by Macintyre, Duddy and Piggins. However  we prove that there exists no aggregation rule satisfying our condition. Furthermore, contrarily to Macintyre, Duddy and Piggins we do not need to consider moves for every players or for a majority of voters, a condition about two players is sufficient to get our impossibility result.\\

In the discrete setting, the No Show Paradox has been introduced by \cite{FishburnBrams}. This paradox means that there exists a player who would rather not vote.  Sometimes, this paradox is also called the Abstention Paradox.\\
\cite{Moulin} has proved that if a  voting function chooses the Condorcet winner when it exists, then  this voting rule generates a No Show Paradox whenever there are at least 4 alternatives and  25 voters. We recall that outcomes of voting functions are singleton sets.\\
\cite{Perez} has studied a stronger version of the No Show paradox. This stronger version says that there exists a player whose favorite candidate loses the election if he/she votes honestly, but gets elected if he/she abstains. Perez has shown that almost all known voting rules suffer from this paradox.\\
\cite{JimenoPerez} has extended Moulin's result to Condorcet voting correspondences. In this case, outcomes for the voting rule are nonempty sets and not singleton sets.\\
Recently, \cite{AAMAS2016} has shown that Moulin's result is still valid if there are at least 12 voters. Furthermore, this bound is optimal since the authors have given an example of a voting rule satisfying the Condorcet criterion and avoiding the No Show Paradox for up to 11 voters.\\

To the author's knowledge, the No Show paradox has not been studied in the topological social choice framework. However, we can mention that there exists relations between the "classical" social choice setting and the topological social choice setting. What we call "classical" social choice is the study of aggregation rules in a discrete setting (with the discrete topology). The relation between this two setting has been given by \cite{Baryshnikov}. Baryshnikov and then \cite{Tanaka} have given a proof of Arrow's theorem using  an homological strategy  closed to the one given by Chichilnisky in her study of continuous aggregation rule over the sphere $S^n$. In this paper, we will also used an homological approach to prove our theorems.  \\

The structure of this note is the following: In the first section we study the Twin Paradox, in the second section we study the No Show Paradox, and in an appendix we recall all the  basic notions of algebraic topology used during our proofs.

\section{The twin paradox}
In this section we define and study the Twin Paradox. \\
Suppose that an aggregation rule in a topological space $X$ takes individual preferences $x_1, \ldots, x_k$ and gives the social preference $x $. Then, suppose that the $j$-th player changes his point of view and decide to modify his choice from $x_j$ to $x_i$. Then the $i$-th player and the $j$-th player have exactly the same preference and we can consider the $j$-th player as a twin of the $i$-th player. With this new situation, we can hope that the aggregation rule will give a new social preference closer to $x_i$, if it is not the case we have a Twin paradox. Indeed, $x_i$ is now the choice of more players and then we hope that the new social preference will reflect this  situation and will be closer to $x_i$.  \\
Here closer means that we consider $x_i$ in a metric space. We  now define the Twin Condition which corresponds to the situation explained above.

\begin{Def}
Consider a metric space $(X,d)$ where $d$ is the metric on $X$. We say that $f:X^k\rightarrow X$ satisfies the \emph{Twin Condition}  if  for all $i \neq j$ and all $x_1,\ldots, x_k \in X$ with $x_i\neq x_j$ we have:
$$d\Big(f(x_1,\ldots,x_i, \ldots,x_{j-1}, x_i,x_{j+1}, \ldots, x_k), \,x_i\Big) \leq d\Big(f(x_1,\ldots, x_k), \, x_i\Big),$$
and the inequality is strict for  $x_i \neq f(x_1,\ldots, x_k)$.
\end{Def}

The Twin Paradox is a situation where the Twin Condition is not satisfied.\\

In the following, we study the Twin Condition on the sphere $S^n$. The distance between two points on the sphere is the length of a great circle joining these two points. If $x, y \in S^n$ then we denote by $d_{S^n}(x,y)$  the distance on the sphere between these two points. We have $d_{S^n}(x,y)=\arccos(x\cdot y)$, where $x\cdot y$ is the usual scalar product of $x, y \in \RR^{n+1}$. We can already remark that $d_{S^n}(x,y)\leq \pi$ for all $x,y \in S^n$ and $d_{S^n}(x,y)=\pi$ means $x=-y$.\\

The main result of this section is the following:
\begin{Thm}\label{Thm:twin}
For $k\geq 3$ players, there exists no continuous aggregation rule \\ \mbox{$f:(S^n)^k \rightarrow S^n$} which satisfies the Twin Condition.
\end{Thm}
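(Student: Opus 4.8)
The plan is to argue by degree theory, turning the Twin Condition into an arithmetic constraint on the local degrees of $f$ along each coordinate. Fix a basepoint $p\in S^n$ and, for each player $\ell$, let $\iota_\ell\colon S^n\to (S^n)^k$ be the inclusion sending $x$ to the configuration with $x$ in slot $\ell$ and $p$ in every other slot; set $d_\ell=\deg(f\circ\iota_\ell)\in\ZZ$. For a pair $i\neq j$ let $G_{ij}\colon S^n\to S^n$ be the diagonal map obtained by putting $x$ in slots $i$ and $j$ and $p$ everywhere else. Writing $G_{ij}=f\circ\Delta_{ij}$, where $\Delta_{ij}$ embeds $S^n$ diagonally into factors $i,j$, a K\"unneth computation in $H_n$ gives $(\Delta_{ij})_*[S^n]=\bar e_i+\bar e_j$, and hence $\deg G_{ij}=d_i+d_j$. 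These homological facts, together with the degree formalism, are exactly the ones recalled in the appendix.

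The heart of the argument is to show $\deg G_{ij}=1$ for every pair, and for this I would use the elementary fact that a continuous self-map $h$ of $S^n$ with $h(x)\neq -x$ for all $x$ is homotopic to the identity (slide along the unique minimal geodesic from $x$ to $h(x)$), so $\deg h=1$. I therefore need $G_{ij}(x)\neq -x$ for all $x$. For $x\neq p$ this follows from the Twin Condition: the configuration defining $G_{ij}(x)$ is obtained from the one defining $(f\circ\iota_i)(x)$ by moving slot $j$ from $p$ to $x_i=x$, so $d(G_{ij}(x),x)\le d((f\circ\iota_i)(x),x)$, with strict inequality unless $(f\circ\iota_i)(x)=x$. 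If $G_{ij}(x)=-x$ then the left-hand side equals $\pi$, forcing $(f\circ\iota_i)(x)=-x\neq x$ and therefore a strict inequality, a contradiction. The strict part of the Twin Condition is essential here.

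The remaining obstacle, which I expect to be the delicate point, is the basepoint itself: $G_{ij}(p)=f(p,\dots,p)=:D(p)$, and the previous step says nothing about $x=p$. To handle it I would first show that the full diagonal map $D(x)=f(x,\dots,x)$ satisfies $D(x)\neq -x$ for every $x$. Collapsing a whole configuration onto a single value $x$ one coordinate at a time, the Twin Condition gives, for any auxiliary basepoint $q$ and any $x\neq q$, a chain $d(D(x),x)\le\dots\le d(f_i^{(q)}(x),x)$ ending at the one-active-coordinate map based at $q$; the same antipodal-plus-strictness argument as above rules out $D(x)=-x$ for all $x\neq q$. Since $n\ge 1$, every $x$ avoids some basepoint $q$, so $D(x)\neq -x$ for all $x$, and in particular $D(p)\neq -p$. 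Thus $G_{ij}(x)\neq -x$ on all of $S^n$, whence $\deg G_{ij}=d_i+d_j=1$.

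Finally I would convert these relations into a contradiction. With $k\ge 3$, pick three distinct players; from $d_1+d_2=1$ and $d_1+d_3=1$ we get $d_2=d_3$, while $d_2+d_3=1$ gives $2d_2=1$, impossible in $\ZZ$. Hence no continuous $f$ can satisfy the Twin Condition. The hypothesis $k\ge 3$ is used exactly here: for $k=2$ one only obtains $d_1+d_2=1$, which is consistent (for instance with a dictatorial projection), so it is the strict inequality combined with the presence of a third player that makes the paradox unavoidable.
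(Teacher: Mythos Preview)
Your proof is correct and follows essentially the same route as the paper: show that each diagonal map $G_{ij}$ is nowhere antipodal (hence homotopic to $id_{S^n}$), deduce $\deg G_{ij}=d_i+d_j=1$ via $H_n$, and extract the parity contradiction from $k\ge 3$. The only organizational difference is that the paper sidesteps your basepoint case-split by noting that the two-variable restriction $\tilde f_{i,j}(x,y)=f(e_1,\dots,x,\dots,y,\dots,e_1)$ itself satisfies the Twin Condition, so applying the ``diagonal is nowhere antipodal'' lemma directly to $\tilde f_{i,j}$ gives $G_{ij}(x)=\tilde f_{i,j}(x,x)\neq -x$ for all $x$ (including $x=p$) in one stroke.
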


 We remark  that a dictatorship do not satisfy the Twin Condition, but the non-existence of an aggregation rule satisfying the Twin Condition is not straightforward.\\

In order to prove this theorem we first prove  that the Twin Condition on the sphere entails the following condition:
\begin{Def}
We say that $f:X^k \rightarrow X$ satisfies the \emph{Nowhere Anti-Unanimity Condition} if for all $x \in X$ we have $f(x,x,\ldots,x) \neq -x$.
\end{Def}
Remark that this definition is still valid when $k=1$. This just means that the function satisfies $f(x) \neq -x$. Thus in the following we still use the term Nowhere Anti-Unanimity Condition even if $k=1$.\\

When the space $X$ is the sphere  $S^n$ with the distance $d_{S^n}$, we have:

\begin{Lem}\label{twinanti}
Suppose that  $f:(S^n)^k \rightarrow S^n$ satisfies the Twin condition then $f$ satisfies the Nowhere  Anti-Unaninimity Condition.
\end{Lem}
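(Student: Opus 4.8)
The plan is to argue by contradiction, exploiting the fact that $\pi$ is the maximal possible value of the spherical distance $d_{S^n}$. Suppose $f$ violates the Nowhere Anti-Unanimity Condition, so that $f(x,\ldots,x)=-x$ for some $x\in S^n$. Since $n\geq 1$, the sphere contains a point $y\neq x$. I would apply the Twin Condition to the near-unanimous profile $(x,\ldots,x,y)$, in which players $1,\ldots,k-1$ choose $x$ and player $k$ chooses $y$, with the indices $i=1$ and $j=k$; this is legitimate because $x_i=x\neq y=x_j$ (and it requires only $k\geq 2$). Replacing the $j$-th coordinate by $x_i=x$ turns this profile into the unanimous profile $(x,\ldots,x)$, so the Twin Condition yields
\[ d_{S^n}\big(f(x,\ldots,x),\,x\big)\;\leq\;d_{S^n}\big(f(x,\ldots,x,y),\,x\big). \]

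The key step is to read off the two sides. By the contradiction hypothesis the left-hand side equals $d_{S^n}(-x,x)=\pi$, which is the largest value $d_{S^n}$ can attain; since $d_{S^n}(\,\cdot\,,x)\leq \pi$ everywhere, the displayed inequality forces the right-hand side to equal $\pi$ as well, and therefore (using the recalled fact that $d_{S^n}(z,x)=\pi$ means $z=-x$) $f(x,\ldots,x,y)=-x$. In particular $f(x,\ldots,x,y)=-x\neq x=x_i$, because no point of $S^n$ is its own antipode.

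Having secured $x_i\neq f(x,\ldots,x,y)$, I would then invoke the strict clause of the Twin Condition, which now applies and gives $d_{S^n}\big(f(x,\ldots,x),x\big)<d_{S^n}\big(f(x,\ldots,x,y),x\big)$, i.e. $\pi<\pi$, a contradiction. Hence $f(x,\ldots,x)\neq -x$ for every $x$, which is the Nowhere Anti-Unanimity Condition. The only point requiring care is verifying that the strict part of the condition genuinely triggers: this is exactly the role of the intermediate deduction $f(x,\ldots,x,y)=-x$ together with the observation $-x\neq x$ on the sphere, so I would state that verification explicitly rather than treat it as automatic.
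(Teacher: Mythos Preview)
Your proof is correct and follows essentially the same route as the paper's: assume $f(x,\ldots,x)=-x$, apply the Twin Condition to a profile with a single dissenter $y\neq x$ so that the ``twin'' profile is unanimous, use maximality of $\pi$ to force $f(x,\ldots,x,y)=-x$, and then trigger the strict clause to obtain $\pi<\pi$. The only cosmetic difference is that you fix the dissenter in position $k$ while the paper leaves the position arbitrary.
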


\begin{proof}
By contradiction. Assume the existence of a point $x_0 \in S^n$ such that $f(x_0,\ldots,x_0) =-x_0$. As $f$ satisfies the Twin Condition we have for all $y \in S^n$ such that $y \neq x_0$: 
 $$\pi=d_{S^n}(-x_0,x_0)=d_{S^n}\big(f(x_0,\ldots,x_0),x_0\big)\leq  d_{S^n}\big(f(x_0,\ldots,x_0, y, x_0,\ldots,x_0),x_0\big).$$
 Then $f(x_0,\ldots,x_0,y,x_0,\ldots,x_0\big)=-x_0$, thus $x_0 \neq f(x_0,\ldots,x_0,y,x_0,\ldots,x_0\big)$. In this situation the Twin Condition implies that the previous inequality is strict.
 As the distance between two points on the sphere is never strictly bigger than than $\pi$, we get a contradiction and then we deduce that for all $x \in S^n$ we have $f(x,\ldots,x) \neq -x$.
\end{proof}

Now, we introduce some notations.

\begin{Def}
We denote by $\delta^{(k)}_{i,j}$ the following function:
\begin{eqnarray*}
\delta^{(k)}_{i,j}: S^n & \longrightarrow & (S^n)^k\\
x&\longmapsto & (e_1,\ldots,e_1,x,e_1,\ldots,e_1,x,e_1,\ldots,e_1)
\end{eqnarray*}
where $i \neq j $, $e_1=(1,0,\ldots,0) \in S^n \subset \RR^{n+1}$ and $x$ appears in $\delta^{(k)}_{i,j}(x)$ in the $i$-th and $j$-th coordinate.\\
Let $f:(S^n)^k \rightarrow S^n$ be a function, we denote by $f_{i,j}$ the function $f\circ \delta^{(k)}_{i,j}$.\\
\end{Def}

With these notations we deduce the following lemma:
\begin{Lem}\label{lem:fijanti}
If $f:(S^n)^k \rightarrow S^n$ satisfies the Twin Condition then  $f_{i,j}$ satisfies the Nowhere Anti-Unanimity Condition.
\end{Lem}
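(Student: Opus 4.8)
The plan is to split on whether the input point equals the base point $e_1$ and, in the remaining case, to force a contradiction out of the Twin Condition by comparing the twin configuration $\delta^{(k)}_{i,j}(x)$ with the configuration in which only the $i$-th coordinate carries $x$. Recall that the Nowhere Anti-Unanimity Condition for the one-variable map $f_{i,j}$ reads $f_{i,j}(x)\neq -x$ for every $x\in S^n$, so that is exactly what I would establish.

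First I would dispose of the case $x=e_1$. Here $\delta^{(k)}_{i,j}(e_1)=(e_1,\ldots,e_1)$, whence $f_{i,j}(e_1)=f(e_1,\ldots,e_1)$, and Lemma~\ref{twinanti} gives $f(e_1,\ldots,e_1)\neq -e_1$; thus $f_{i,j}(e_1)\neq -e_1$. For $x\neq e_1$ I would argue by contradiction, assuming $f_{i,j}(x)=-x$. Let $A$ denote the input carrying $x$ in the $i$-th coordinate and $e_1$ in every other coordinate, in particular in the $j$-th. Since the $i$-th and $j$-th coordinates of $A$ are $x$ and $e_1$ with $x\neq e_1$, moving the $j$-th coordinate up to $x$ transforms $A$ into $\delta^{(k)}_{i,j}(x)$, so the Twin Condition (with the target point $x_i$ taken to be $x$) applies and yields
$$d_{S^n}\big(f_{i,j}(x),x\big)\leq d_{S^n}\big(f(A),x\big).$$
The left-hand side equals $\pi$ because $f_{i,j}(x)=-x$; as no distance on $S^n$ exceeds $\pi$, this forces $d_{S^n}(f(A),x)=\pi$, i.e. $f(A)=-x$ by the characterization recalled before the statement. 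But then $f(A)=-x\neq x$, so the strictness clause of the Twin Condition is triggered and gives $d_{S^n}(f_{i,j}(x),x)<d_{S^n}(f(A),x)=\pi$, contradicting $d_{S^n}(f_{i,j}(x),x)=\pi$. Hence $f_{i,j}(x)\neq -x$, and combining the two cases shows that $f_{i,j}$ satisfies the Nowhere Anti-Unanimity Condition.

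The only delicate point is the bookkeeping inside the Twin Condition: one has to choose the comparison configuration $A$ so that $x$ occupies the target ($i$-th) slot while $e_1\neq x$ occupies the slot ($j$-th) that is moved, which is precisely what makes both the hypothesis $x_i\neq x_j$ available at the outset and, after deducing $f(A)=-x$, the strictness trigger $x_i\neq f(A)$ available at the end. Everything else is squeezed automatically by the bound $d_{S^n}(\cdot,\cdot)\leq\pi$ together with $d_{S^n}(a,x)=\pi\iff a=-x$, so no genuine computation is required.
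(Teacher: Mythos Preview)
Your proof is correct and follows essentially the same idea as the paper's. The paper packages the argument by introducing the auxiliary two-variable map $\tilde{f}_{i,j}(x,y)=f(e_1,\ldots,x,\ldots,y,\ldots,e_1)$, notes that it inherits the Twin Condition, and applies Lemma~\ref{twinanti} to it directly---thereby avoiding your case split on $x=e_1$---but the underlying contradiction (squeezing against the bound $d_{S^n}\le\pi$ and then invoking the strictness clause) is exactly yours.
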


\begin{proof}
We consider the following functions 
\begin{eqnarray*}
\tilde{f}_{i,j}:(S^n)^2 &\rightarrow& S^n\\
(x,y)&\mapsto& f(e_1,\ldots,x,e_1,\ldots,e_1,y,e_1,\ldots,e_1)
\end{eqnarray*}
where $i \neq j $, $e_1=(1,0,\ldots,0) \in S^n \subset \RR^{n+1}$ and $x$ appears in $f_{i,j}(x)$ in the $i$-th coordinate and $y$ in the $j$-th coordinate.\\
As $f$ satisfies the Twin Condition then $\tilde{f}_{i,j}$ satisfies also the Twin Condition. Thus by Lemma  \ref{twinanti}, $\tilde{f}_{i,j}$ satisfies the Nowhere Anti-Unanimity Condition. Then for all $x \in S^n$ we have $f_{i,j}(x)=\tilde{f}_{i,j}(x,x) \neq -x$.
\end{proof}

The following lemma will be used to show that $f_{i,j}$ is homotopic to the identity function on the sphere $id_{S^n}$.

\begin{Lem}\label{lem:homotopy}
If a continuous function $g:S^n \rightarrow S^n$ satisfies the Nowhere Anti-Unanimity Condition then $g$ is  homotopic to $id_{S^n}$.
\end{Lem}
\begin{proof}
We consider $$h(t,x)=\dfrac{t.g(x)+(1-t). x}{\| t.g(x)+(1-t). x \|}.$$
The function $h$ is well defined and continuous because $g$ satisfies the Nowhere Anti-Unanimous Condition and then the denominator do not vanish. Furthermore, we have $h(0,x)=x=id_{S^n}(x)$ and $h(1,x)=g(x)$. Then $h$ is an homotopy between $g$ and $id_{S^n}$.
\end{proof}

The last ingredient for the proof of Theorem \ref{Thm:twin} is the following:
\begin{Lem}\label{lem:fin}
If $k \geq 3$ then there exists no continuous functions $f:(S^n)^k \rightarrow S^n$ such that for all $ i\neq j$, $f_{i,j}$ is homotopic to $id_{S^n}$.
\end{Lem}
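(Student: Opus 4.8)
The plan is to pass to singular homology with integer coefficients and track the fundamental class, in the homological spirit announced in the introduction. Write $\iota_m : S^n \hookrightarrow (S^n)^k$ for the inclusion of the $m$-th factor, sending $x$ to the tuple whose $m$-th coordinate is $x$ and whose other coordinates are all $e_1$. Since $n \geq 1$ and $H_*(S^n)$ is free, the K\"unneth theorem gives $H_n\big((S^n)^k\big) \cong \ZZ^k$, with basis the classes $(\iota_m)_*[S^n]$ for $m = 1,\dots,k$, where $[S^n]$ denotes the fundamental class. A continuous map $f:(S^n)^k \to S^n$ then induces $f_* : H_n\big((S^n)^k\big)\to H_n(S^n)=\ZZ$, and I would record the $k$ integers $a_m := f_*\big((\iota_m)_*[S^n]\big)$; concretely, $a_m$ is the degree of the restriction $f\circ \iota_m : S^n \to S^n$, since $f_*(\iota_m)_*[S^n]=(f\circ\iota_m)_*[S^n]$.

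The heart of the argument is to compute the effect of $\delta^{(k)}_{i,j}$ on $H_n$. I would observe that $\delta^{(k)}_{i,j}$ factors as $S^n \xrightarrow{\Delta} S^n\times S^n \hookrightarrow (S^n)^k$, where $\Delta$ is the diagonal and the second map places the two copies in positions $i,j$ and sets $e_1$ in all other coordinates. The standard computation of the diagonal (the comultiplication on $H_*(S^n)$) gives $\Delta_*[S^n]=[S^n]\otimes 1 + 1\otimes[S^n]$ in $H_n(S^n\times S^n)$, and the inclusion carries these two summands to $(\iota_i)_*[S^n]$ and $(\iota_j)_*[S^n]$ respectively, so that
$$\big(\delta^{(k)}_{i,j}\big)_*[S^n] = (\iota_i)_*[S^n] + (\iota_j)_*[S^n].$$
Applying $f_*$ and using functoriality, the degree of $f_{i,j}=f\circ \delta^{(k)}_{i,j}$ equals $a_i + a_j$.

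Now I would bring in the hypothesis: each $f_{i,j}$ is homotopic to $id_{S^n}$, hence has degree $1$, so that
$$a_i + a_j = 1 \qquad \text{for all } i \neq j.$$
For $k=2$ this single equation is solvable over $\ZZ$ (e.g. $a_1=1,\ a_2=0$), which explains why the bound $k\geq 3$ is needed. For $k\geq 3$, however, the system is over-determined: choosing three distinct indices $p,q,r$, the equations $a_p+a_q=a_p+a_r=1$ force $a_q=a_r$, while $a_q+a_r=1$ then gives $2a_q=1$, which is impossible in $\ZZ$. This contradiction shows that no such $f$ can exist.

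I expect the main obstacle to be the middle step, namely justifying the homology class of $\delta^{(k)}_{i,j}$. Everything downstream is elementary linear algebra over $\ZZ$, and the K\"unneth decomposition is routine; the one genuinely topological input is the comultiplication formula $\Delta_*[S^n]=[S^n]\otimes 1 + 1\otimes[S^n]$, which is precisely what makes the two ``twin'' slots contribute \emph{additively} to the degree. Making this precise (via the reduced diagonal, or a direct cellular computation on $S^n$ with its standard CW structure) is the step I would treat most carefully.
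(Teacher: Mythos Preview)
Your argument is correct and follows essentially the same route as the paper: both pass to $H_n$, write $H_n(f)$ as a $\ZZ$-linear map with coefficients $d_m$ (your $a_m$), compute $\deg(f_{i,j})=d_i+d_j$, and derive the contradiction $2d_1=1$ from three pairwise equations when $k\geq 3$. The only difference is presentational: the paper packages the key computation $\big(\delta^{(k)}_{i,j}\big)_*[S^n]=(\iota_i)_*[S^n]+(\iota_j)_*[S^n]$ as a black-box property $(P_2)$ in its appendix, whereas you unpack it explicitly via K\"unneth and the diagonal comultiplication on $H_*(S^n)$---which is exactly the justification $(P_2)$ rests on.
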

The proof of this lemma uses classical homological properties. All these properties are recall in the appendix and  are denoted by $(P_1)$, $(P_2)$, $(P_3)$ and $(P_4)$.
\begin{proof}
By contradiction. Suppose that there exists a continuous function $f$ such that $f_{i,j}$ is homotopic to $id_{S^n}$, for all $i \neq j$.  The singular homology gives a functor $H_n$  which entails the following group homomorphism:
\begin{eqnarray*}
H_n(f):\ZZ^k & \longrightarrow & \ZZ\\
(z_1,\ldots,z_k)& \longmapsto & \sum_{\alpha=1}^k d_{\alpha} \times z_{\alpha}
\end{eqnarray*}
where $d_{\alpha} \in \ZZ$.\\
As $f_{i,j}=f\circ \delta^{(k)}_{i,j}$, by $(P_1)$ we have $$H_n(f_{i,j})=H_n(f) \circ H_n(\delta^{(k)}_{i,j}).$$ 
By $(P_2)$ we deduce that 
\begin{eqnarray*}
 H_n(f_{i,j}):\ZZ & \longrightarrow & \ZZ\\
 z &\longmapsto &(d_i+d_j)\times z
 \end{eqnarray*}
 Thus $$\deg(f_{i,j})=d_i+d_j,$$
 where $\deg$ denotes the degree of a continuous function from $S^n$ to $S^n$.\\
Furthermore, $f_{i,j}$ is homotopic to $id_{S^n}$ then by $(P_3)$, and $(P_4)$ we have 
$$\deg(f_{i,j})=1.$$ 
It follows 
 $$d_i+d_j=1 \textrm{ for all } i \neq j.$$
 As $k \geq 3$ this gives: $d_1+d_2=d_1+d_3=d_2+d_3=1$, and we deduce $d_1=1/2$. As $d_1$ must be an integer we get a contradiction and this proves the desired result.
\end{proof}

Now, we can prove Theorem \ref{Thm:twin}.
\begin{proof}[Proof of Theorem \ref{Thm:twin}]
By contradiction. Suppose that there exists a continuous function $f$ satisfying the Twin Condition. By Lemma \ref{lem:fijanti} and Lemma \ref{lem:homotopy}, for all $i \neq j$ we have $f_{i,j}$ homotopic to $id_{S^n}$. Then Lemma \ref{lem:fin} gives the desired contradiction.
\end{proof}

\section{The No Show Paradox}
In this section we describe  the No-Show Paradox in the topological social choice setting. In this situation we study the behavior of a family of aggregation rules. Indeed, in this situation the number of agents will change. Thus we study a family of aggregation rules $(f^{(k)})_k$ where the exponent $k$ corresponds to the number of agents. We would like to have an aggregation rule for which each player has an incentive to give his preference. This means  it is always better for players to give their preferences and to participate to the vote. This leads to the following definition:

\begin{Def} Let $(X,d)$ be a metric space and $(f^{(k)})_k$ be a family of functions from $X^k$ to $X$.\\
We say that the family  of functions $(f^{(k)})_k$ satisfies the Participation Condition  if for all $k \geq 2$ and for all $x_1,\ldots, x_{k+1} \in X$ we have:
$$d\Big(f^{(k+1)}(x_1,\ldots,x_i,\ldots,x_{k+1}),x_i\Big) \leq  d\Big(f^{(k)}(x_1,\ldots,\hat{x}_i,\ldots,x_{k+1}),x_i\Big),
$$
and the inequality is strict for $x_i \neq f^{(k)}(x_1,\ldots,\hat{x}_i,\ldots,x_{k+1})$.
\end{Def}
The notation $\hat{x}_i$ means that we omit $x_i$.\\

We remark that the Participation Condition implies 
$$f^{(k+1)}(x_1,\ldots,x_{i-1},f^{(k)}(x_1,\ldots,\hat{x}_i,\ldots,x_{k+1}),x_{i+1},\ldots,x_{k+1})=f^{(k)}(x_1,\ldots,\hat{x}_i,\ldots,x_{k+1}).$$
This means that  the Participation Condition entails that the ‘vote’ of an outsider who
agrees with the outcome will not change the outcome. \\
If a family of aggregation rule do not  satisfies the Participation Condition then we say that we have a No Show Paradox.\\

In this section we prove that we cannot avoid the No Show Paradox:

\begin{Thm}\label{Thm:noshow}
There exists no family  $f^{(k)}_k$ of continuous aggregation rules\\   $f^{(k)}:(S^n)^k \rightarrow S^n$ which satisfies the Participation Condition.
\end{Thm}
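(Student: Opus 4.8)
The plan is to reduce the statement to Lemma \ref{lem:fin}, exactly as in the proof of Theorem \ref{Thm:twin}. I would single out one member of the family, say $f^{(4)}$, and show that each of its two-coordinate restrictions $f^{(4)}_{i,j}=f^{(4)}\circ\delta^{(4)}_{i,j}$ is homotopic to $id_{S^n}$; since $4\geq 3$, Lemma \ref{lem:fin} then yields the contradiction. By Lemma \ref{lem:homotopy} it suffices to prove that every $f^{(4)}_{i,j}$ satisfies the Nowhere Anti-Unanimity Condition, and for this I would extract two consequences of the Participation Condition, one for a single active coordinate and one for two.

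First I would prove the single-coordinate statement. Fix $k\geq 2$ and a position $i$, and consider $g(x)=f^{(k+1)}(e_1,\ldots,x,\ldots,e_1)$ with $x$ in coordinate $i$ and $e_1$ elsewhere. Removing the $i$-th voter leaves the constant $c=f^{(k)}(e_1,\ldots,e_1)$, so the Participation Condition gives $d_{S^n}(g(x),x)\leq d_{S^n}(c,x)$, with strict inequality whenever $x\neq c$. If $x\neq -c$ then $d_{S^n}(c,x)<\pi$ and this already forces $d_{S^n}(g(x),x)<\pi$; if $x=-c$ then $x\neq c$, so the strict inequality gives $d_{S^n}(g(-c),-c)<d_{S^n}(c,-c)=\pi$. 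In all cases $d_{S^n}(g(x),x)<\pi$, i.e. $g(x)\neq -x$, so every single-coordinate restriction of $f^{(k+1)}$ satisfies the Nowhere Anti-Unanimity Condition (for $k+1\geq 3$).

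Next I would bootstrap to two coordinates. Fix $i\neq j$ and consider $f^{(4)}_{i,j}(x)=f^{(4)}(\ldots,x,\ldots,x,\ldots)$ with $x$ in coordinates $i$ and $j$. Removing the $i$-th voter turns this into the single-coordinate restriction $x\mapsto f^{(3)}(\ldots,x,\ldots)$ of $f^{(3)}$, which satisfies the Nowhere Anti-Unanimity Condition by the previous paragraph (with $k=2$), hence stays at distance strictly less than $\pi$ from $x$. The Participation Condition then gives
$$d_{S^n}\big(f^{(4)}_{i,j}(x),x\big)\leq d_{S^n}\big(f^{(3)}(\ldots,x,\ldots),x\big)<\pi,$$
so $f^{(4)}_{i,j}(x)\neq -x$ and $f^{(4)}_{i,j}$ satisfies the Nowhere Anti-Unanimity Condition. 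By Lemma \ref{lem:homotopy} each $f^{(4)}_{i,j}$ is homotopic to $id_{S^n}$, and Lemma \ref{lem:fin} (with $k=4$) delivers the contradiction.

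The individual steps are short; the point to get right is the interplay between the strict and non-strict parts of the Participation Condition in the antipodal case $x=-c$, together with the observation that one application only produces Nowhere Anti-Unanimity for a single active coordinate. One must therefore climb the family one level at a time --- from the constant restriction of $f^{(2)}$, to the one-coordinate restriction of $f^{(3)}$, to the two-coordinate restriction of $f^{(4)}$ --- before Lemma \ref{lem:fin} becomes applicable, which is why the argument is run at $k=4$ rather than at $k=3$.
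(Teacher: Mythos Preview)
Your proof is correct and follows the same overall architecture as the paper: establish that the two-coordinate restrictions $f_{i,j}$ satisfy the Nowhere Anti-Unanimity Condition, invoke Lemma \ref{lem:homotopy}, and conclude by Lemma \ref{lem:fin}. The difference is only in how you reach Nowhere Anti-Unanimity.

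The paper (Lemma \ref{lem:noshow}) does it in one step, already at level $k+1=3$: if $f^{(k+1)}_{i,j}(x_0)=-x_0$, one application of the Participation Condition forces $f^{(k)}(e_1,\ldots,x_0,\ldots,e_1)=-x_0$ as well, whence $x_0\neq f^{(k)}(\ldots)$ and the \emph{strict} clause gives $\pi<\pi$. So your assertion that ``one must therefore climb the family one level at a time'' and work at $k=4$ is not quite right: the contradiction argument with the strict inequality handles two active coordinates in a single descent. Your two-stage bootstrap (constant $\to$ one coordinate $\to$ two coordinates) is a valid alternative, but it is longer than necessary and forces you up to $f^{(4)}$ where $f^{(3)}$ already suffices.
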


The proof of this theorem is similar to the one given for the Twin Paradox. We need to introduce the functions $f^{(k)}_{i,j}$ defined by $f^{(k)}_{i,j}=f^{(k)} \circ \delta^{(k)}_{i,j}$.

\begin{Lem}\label{lem:noshow}
If the family of functions $f^{(k)}:(S^n)^k \rightarrow S^n$  satisfies the Participation Condition then $f^{(k+1)}_{i,j}$ satisfies the Nowhere Anti-Unanimity Condition.
\end{Lem}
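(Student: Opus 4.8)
The plan is to argue by contradiction, following the pattern of Lemma~\ref{twinanti} but replacing the Twin Condition by the Participation Condition in order to pass from $k+1$ to $k$ agents. Since $f^{(k+1)}_{i,j}=f^{(k+1)}\circ\delta^{(k+1)}_{i,j}$ is a function of a single variable $x\in S^n$, the remark following the definition of the Nowhere Anti-Unanimity Condition tells us that what we must establish is simply $f^{(k+1)}_{i,j}(x)\neq -x$ for every $x\in S^n$.

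First I would suppose, for contradiction, that $f^{(k+1)}_{i,j}(x_0)=-x_0$ for some $x_0\in S^n$; equivalently, evaluating $f^{(k+1)}$ on the configuration that carries $x_0$ in the coordinates $i$ and $j$ and $e_1$ in every other coordinate produces $-x_0$. The feature to exploit is that $d_{S^n}(-x_0,x_0)=\pi$ is the largest distance available on the sphere. Next I would apply the Participation Condition to this configuration, taking the distinguished agent to be the twin sitting in coordinate $j$, whose preference is $x_0$. The left-hand side of the Participation inequality equals $d_{S^n}(f^{(k+1)}_{i,j}(x_0),x_0)=d_{S^n}(-x_0,x_0)=\pi$, so the condition forces $\pi\leq d_{S^n}\big(f^{(k)}(\ldots\hat{x}_j\ldots),x_0\big)$, where the argument is the $k$-agent configuration obtained by deleting coordinate $j$. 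Because no distance on $S^n$ exceeds $\pi$, this value is exactly $\pi$, whence $f^{(k)}(\ldots\hat{x}_j\ldots)=-x_0\neq x_0$.

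Finally, since the $k$-agent outcome differs from $x_0$, the strict clause of the Participation Condition applies and yields $\pi=d_{S^n}(f^{(k+1)}_{i,j}(x_0),x_0)<d_{S^n}\big(f^{(k)}(\ldots\hat{x}_j\ldots),x_0\big)=\pi$, the desired contradiction. I do not expect a genuine obstacle here: the argument is the direct analogue of Lemma~\ref{twinanti}, and the only points requiring care are the bookkeeping of the coordinates (the distinguished agent of the Participation Condition must be one of the two twin positions) and the standing observation, already recorded in the text, that $d_{S^n}$ attains $\pi$ only at antipodal points. It is precisely this maximality that upgrades the weak inequality to an equality and then triggers the strict clause.
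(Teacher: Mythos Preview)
Your proof is correct and follows essentially the same route as the paper's: assume $f^{(k+1)}_{i,j}(x_0)=-x_0$, apply the Participation Condition at the twin coordinate to bound $\pi$ by the $k$-agent distance, deduce that the $k$-agent outcome is also $-x_0$, and then invoke the strict clause to obtain $\pi<\pi$.
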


\begin{proof}
By contradiction. If there exists $x_0 \in S^n$ such that $f^{(k+1)}_{i,j}(x_0) = -x_0$ then 
 $$d_{S^n}\big(f^{(k+1)}_{i,j}(x_0),x_0\big)=d_{S^n}(-x_0,x_0)=\pi $$ 
 and the Participation Condition gives 
 \begin{eqnarray*}
d_{S^n}(f^{(k+1)}_{i,j}(x_0),x_0)&=& d_{S^n}\big(f^{(k+1)}(e_1,\ldots,e_1,x_0,e_1,\ldots,e_1,x_0,e_1\ldots,e_1),x_0\big) \\
&\leq& d_{S^n}\big(f^{(k)}(e_1,\ldots,e_1,x_0,e_1,\ldots,e_1),x_0\big).
\end{eqnarray*}
This implies $f^{(k)}(e_1,\ldots,e_1,x_0,e_1,\ldots,e_1)=-x_0$. \\
 Thus $x_0 \neq f^{(k)}(e_1,\ldots,e_1,x_0,e_1,\ldots,e_1)$ and in this situation the Participation Condition says that the previous inequality must be strict. Therefore,
$$\pi<d_{S^n}\big(f^{(k)}(e_1,\ldots,e_1,x_0,e_1,\ldots,e_1),x_0\big).$$ This gives a contradiction and thus proves $f^{(k+1)}_{i,j}(x) \neq  -x$, for all $x \in S^n$.\
\end{proof}

\begin{proof}[Proof of Theorem \ref{Thm:noshow}]
By contradiction. Suppose that there exists a family of continuous function satisfying the Participation Condition. By Lemma \ref{lem:noshow} and Lemma \ref{lem:homotopy}, for all $i\neq j$ we have $f^{(k+1)}$ homotopic to $id_{S^n}$. Then Lemma \ref{lem:fin} gives the desired contradiction.
\end{proof}
\section*{Appendix: An algebraic topology toolkit}

Our main tool is singular homology, see e.g. \citep{Hatcher,Rotman}. 
We recall briefly in this section some results from algebraic topology:\\

Singular homology gives a functor $H_n$ such that for a topological space $X$, $H_n(X)$ is an abelian group. Classical examples are $H_n(S^n)=\ZZ$ and $H_n\big((S^{n})^k\big)=\ZZ^k$. Furthermore for a continuous function  $f:X \rightarrow Y$, the functor $H_n$ gives a group homorphism $H_n(f):H_n(X) \rightarrow H_n(Y)$ and if $g:Y \rightarrow Z$ is a continous function we have the following property:
$$(P_1): \quad \quad H_n(f\circ g)=H_n(f)\circ H_n(g).$$

During our proofs we have used the continuous function $\delta^{(k)}_{i,j}$. The group homorphism $H_n(\delta^{(k)}_{i,j})$ associated to this continuous function is
\begin{eqnarray*}
(P_2): \quad \quad H_n(\delta^{(k)}_{i,j}): \ZZ & \longrightarrow & \ZZ^k\\
z&\longmapsto & (0,\ldots,0,z,0,\ldots,0,z,0,\ldots,0)
\end{eqnarray*}
where $z$ appears in $H_n(\delta^{(k)}_{i,j})(z)$ in the $i$-th and $j$-th coordinate.\\

Furthermore, if two continuous functions $f$ and $g$ are homotopic  then we have:
$$(P_3): \quad \quad H_n(f)=H_n(g).$$

A classical situation appears when we study a continuous function $f:S^n \rightarrow S^n$.\\
We recall that in this situation $H_n(f)$ is characterized by an \emph{integer} called the degree of $f$ and denoted by $\deg(f)$. The group homomorphism  $H_n(f):\ZZ \rightarrow \ZZ$ is then given by $z \mapsto \deg(f)\times z$.
The degree of the identity function on the sphere $S^n$ is equal to 1. We have then the property:
 $$(P_4): \quad \quad \deg(id_{S^n})=1.$$

In this note, we consider continuous aggregation rules on the sphere $S^n$. If one single point is removed from the domain $S^n$, i.e. the domain of individual preferences is reduced to $S^n \setminus \{x\}$ with $x$ in $S^n$, then our proof collapse. Indeed, we have $H_n(S^n \setminus \{x\})=0$.

\section*{Acknowledgement}
The author would like to thank the anonymous referees for their helpful comments.  The expression ``Nowhere Anti-Unanimity" is due to one of the referees.

 \bibliographystyle{kluwer} 
 
\bibliography{twin-paradox}

\end{document}